\title{Rabin-$p$ Cryptosystem: Practical and Efficient Method for Rabin based Encryption Scheme}
\titlerunning{Rabin-$p$ Cryptosystem}
\author{M.A. Asbullah \inst{1} \and M.R.K. Ariffin\inst{1,2}} 
\date{}
\institute {Al-Kindi Cryptography Research Laboratory,\\
Institute for Mathematical Research \and
Department of Mathematics, Faculty of Science,\\
Universiti Putra Malaysia (UPM), Selangor, Malaysia\\
\mailsa, \mailsb\\}
\begin{document}
\maketitle
\date{}

\begin{abstract}
In this work, we introduce a new, efficient and practical scheme based on the Rabin cryptosystem without using the Jacobi symbol, message redundancy technique or the needs of extra bits in order to specify the correct plaintext. Our system involves only a single prime number as the decryption key and does only one modular exponentiation. Consequently, this will practically reduce the computational efforts during decryption process. We demonstrate that the decryption is unique and proven to be equivalent to factoring.The scheme is performs better when compared to a number of Rabin cryptosystem variants. 
\end{abstract}
{{\sc Keywords}:
Rabin Cryptosystem, Factoring, Coppersmith’s Theorem, Chinese Remainder Theorem.}

\section{Introduction}

\subsection{Background} 
Prior to 1970’s, encryption and decryption were done only in symmetrical ways. This was the practice until the advent of public key cryptosystem that was introduced by [30]. Yet, at that time the notion of asymmetric cryptosystem is somehow not well realized by many people. In 1978 the RSA cryptosystem [24] went public and it is regarded now by the cryptographic community as the first practical realization of the public key cryptosystem. The security of RSA is based on the intractability to solve the modular $e$-th root problem coupled with the integer factorization problem (IFP) of the form $N=pq$.
\ 
In 1979, one year after the invention of RSA cryptosystem, Michael O. Rabin [20] introduced another cryptosystem which was based on the intractability to solve the square root modulo problem of a composite integer. In fact, this cryptosystem is the first public key system of its kind that was proved equivalent to factoring $N=pq$. At first glance, we might consider Rabin cryptosystem as an RSA variant with the use of the public exponent $e=2$ differing from RSA which utilizes public exponents $e\geq3$. Interestingly, this claim is not necessarily true since by definition $\gcd(e,\phi(N))=1$ for RSA, but $\gcd(e=2,\phi(N))\neq1$ in the case of Rabin’s exponent, where $\phi(N)=(p-1)(q-1)$. In addition, the role of the public exponent $e=2$ of Rabin encryption gives a computational advantage over the RSA.
\
In [20], the encryption is computed by performing a single squaring modulo $N$. This is far more efficient by comparison to the RSA encryption, which requires the calculation of at least a cubic modulo $N$ [3]. Based on recent results in this area the public exponent for RSA must be sufficiently large. Values such as $e=3$ (the smallest possible encryption exponent for RSA) and $e=17$ can no longer be recommended, but commonly used values such as $e=2^{16}+1=65537$ still serve to be fine, thus Rabin has some advantage regarding to this matter [1]. On the other hand, Rabin decryption breaks up into two parts. Firstly are the calculation of two modular exponentiations, and then the computation using Chinese Remainder Theorem (CRT) is used. Here the efficiency of the Rabin decryption is slightly faster than to RSA.
\\ 
\indent The Rabin encryption function in the form $c=m^2 \pmod N$, where $N=pq$ such that $p,q$ are primes congruence $3 \pmod 4$ is considered to be as hard as factoring problem. In other words, it is mathematically proven that a random plaintext can be recovered completely from the ciphertext, if and only if the adversary is able to efficiently factoring the public key $N=pq$. On the contrary, the RSA encryption in the form $c=m^e  \pmod N$ might be easier than factoring problem. This is the case because the equivalent of RSA encryption function vis-a-vis factoring is not yet proven. Therefore the process of finding the $e$-th root is maybe possible without the need of factoring $N=pq$. The security of the RSA encryption scheme is based on the assumption that modular $e$-th root problem is a one-way function and yet the only methods to find the $e$-th root might be only if the adversary is capable to efficiently factoring the public key $N=pq$.
\\
\indent In principle, the Rabin scheme is really efficient, because only a square is required for encryption; furthermore, it is shown to be as hard as factoring problem. Alas, the Rabin cryptosystem suffer from two major drawbacks; the foremost one is because the Rabin’s decryption produces four possible candidates, thus introduces ambiguity or unclearness to decide the correct message out of four possible values. Another drawback is from the fact that its equivalence relation to factorization. On one side, the Rabin cryptosystem gives confidence as the security of breaking such system is proven to be as difficult as factoring compare to RSA. On the other side, the computational equivalence relation of the Rabin cryptosystem and the integer factorization problem makes the scheme vulnerable to an adversary that can launch a stronger attack, namely the chosen ciphertext attack. In summation, any scheme that inherits the properties of a security reduction that is equivalent to factoring is not very practicable as cipher systems [26]. These two disadvantages of the Rabin encryption scheme prevented it from widespread practical use.
\

\subsection{Related work}
In spite of the situation of four-to-one mapping of Rabin’s decryption, and the vulnerability to chosen ciphertext attacks, several attempts were made to solve this problem adequately. It is very interesting to witness continuous efforts in searching for apractical and optimal Rabin cryptosystem by numerous scholars. We put forward the summary for Rabin’s variants as follows.
\

Williams [11] proposed an implementation of the Rabin cryptosystem using the Jacobi symbol. Williams solves for unique decryption while maintaining the property of Rabin’s scheme, i.e. breaking is equivalence to factoring. Subsequently, the utility of the Jacobi symbol as extra information to define the correct square root accompanied with Rabin cryptosystem was also proposed by [13]. Nonetheless, both encrypt-decrypt processes require the Jacobi symbol calculation.This consequence in turn goes to additional computational cost and the extra bits lead to ciphertext overhead.Next is the extra bits methodology. It is a very attractive approach to solve the uniqueness problem in the Rabin decryption procedure. It appeared in current literature such as extra bits introduced by [18] and also utilized together with the Dedikind Sums Theorem in [7].
\

In [3] a redundancy to the message was proposed, which intends to append the plaintext with the repeating of least significant bits of the message with a pre-defined length. This approach will likely provide the decryption with a unique output but has a small probability that decryption may fail. In [19, 29] and [15] the authors proposed a Rabin-type cryptosystem with alternative modulus choice of $N=p^2q$. Such moduli is claimed to be no easier as to factoring the conventional modulus of $N=pq$ [10]. The combination of Rabin cryptosystem with a specific padding method was proposed by [14], [19] by using Optimal Asymmetric Encryption Padding (OAEP) [17] and Rabin-SAEP [5]. Note that the message output by decryption process for this padding scheme is unique but the decryption may fail with small probability.

\subsection{Motivation and Contributions}

It is of practical considerations that motivated researchers to turn the Rabin scheme to be useful and practical as RSA since it possess practical qualities. In general, all the existing variation seems to apply some additional features, for instance; implementing some padding, adding redundancy in the message or manipulate some mathematical pattern, with the target to get a unique decryption result but at the same time losing its computational advantage over RSA. In order to engage this problem and to overcome all the mentioned shortcomings, further analytical work is needed to refine those existing work.
\

In this work, we revisit Rabin cryptosystem and propose anew efficient and practical scheme which has the following characteristics; 
\begin{enumerate}
	\item a cryptosystem that can be proven equivalent to factoring, 
	\item preserve the performance of Rabin encryption while producing a unique message after decryption, 
	\item improve decryption efficiency by using only one modular exponentiation as oppose to typical Rabin-based decryption that use two one modular exponentiation, 
	\item the decryption key using only a single prime number instead of two, 
	\item and finally, resilient to a side channel attack namely the Novak’s attack by avoiding the need for CRT computation.
\end{enumerate}

\subsection{Paper Organization}

Section 2 provides a list of drawbacks from previous strategies that need to be avoided for practically implementing the Rabin encryption scheme. In this section we also highlight the methodology of the research performed.In section 3, we describe our proposal, and attach to it with a numerical example. This is followed by rigorous analysis and discussion upon the security of the scheme, in Section 4. The conclusion appears in the final section.

\section{Preliminaries}

\subsection{Drawbacks of Previous Strategies}

In this section, we initiate a list that describes the drawback of the previous strategies to overcome the Rabin weaknesses. Hence we established conditions that need to be avoided on any attempt to refine the Rabin scheme.

\subsubsection{The Use of Jacobi Symbol}
The necessity to compute Jacobi symbol possibly during the key generation, encryption or decryption makes the scheme less efficient [5]. In terms of computational performance, basically Rabin encryption is extremely fast as long as this process does not require for computing a Jacobi symbol [25].

\subsubsection{Message Redundancy and Padding Mechanism}
Some schemes introduce redundancy upon the message or design padding mechanism aiming to achieve an efficient way to determine the correct plaintext from its four possible candidates. For instance, as for HIME(R) cryptosystem that applies OAEP and Rabin-SAEP with a padding mechanism that designated to be simpler than OAEP. However, both methods still have a probability of decryption failure – even though small.

\subsubsection{The Use of CRT}
We observe that Rabin’s entire variant involves the process of finding all the four square roots modulo a composite number.Such procedure, utilizes the CRT, hence the power analysis, side channel attack (see [23]) is applicable on such computation.

\subsection{Methodology}
Now, we outlined the methodology to overcome the drawbacks of the Rabin cryptosystem and all its variants. Firstly, we put the condition on the modulus of type $N=p^2q$ to be used. We then impose restriction on the plaintext $m$ and ciphertext $c$ space as $m\in \mathbb{Z}_{p^2}$ and $c \in \mathbb{Z}_{p^2q}$, respectively. From the plaintext-ciphertext expansion such restriction leads to a system that is not a length-preserving for the message.
\

Let $m$ and $c$ be the plaintext and ciphertext and $c(m)$ be the function of $c$ taking $m$ as its input. Say, for instance, in RSA the plaintext and ciphertext spaces are the same, thus we denote the mapping as $c(m): \mathbb{Z}_{pq}\longrightarrow \mathbb{Z}_{pq}$. Note that this situation could be an advantage for the RSA scheme since RSA encryption has no message expansion. This is not, however true for all cryptosystems. For example, the plaintext-ciphertext mapping for Okamoto-Uchiyama Cryptosystem [28] is $c(m):\mathbb{Z}_{p}\longrightarrow \mathbb{Z}_{p^2}$, Pailier cryptosystem [22] and the scheme by [8] is $c(m): \mathbb{Z}_{pq}\longrightarrow \mathbb{Z}_{(pq)^2}$, Rabin-SAEP [5] mapping is $c(m): \mathbb{Z}_{\frac{pq}{2}}\longrightarrow \mathbb{Z}_{pq}$ and Schmidt-Samoa cryptosystem [15] is $c(m): \mathbb{Z}_{pq}\longrightarrow \mathbb{Z}_{p^2q}$.
\

The maximum space size is determined by the plaintext space. One way to do it would be to tell the user a maximum number of bits for the plaintext messages. If we view the message as merely the keys for a symmetric encryption scheme, meaning that the message is indeed a short message, then this is fine as many others schemes also implement this approach. Thus, we argue that the restriction of message space would be a hindrance is not an issue. 

\section{Our Proposed Scheme: The Rabin-$p$ Cryptosystem} 

In this section, we provide the details of the proposed cryptosystem namely Rabin-$p$ Cryptosystem. Rabin-$p$ is named after the Rabin cryptosystem with the additional $p$ symbolizing that the proposed scheme only uses a single prime $p$ as the decryption key. The proposed cryptosystem defined as follows.\\
\\
\begin{tabular}{l}
Algorithm 1. \textbf{Key Generation}\\
\hline
INPUT: The size $k$ of the security parameter.\\
OUTPUT: The public key $N=p^2q$ and the private key $p$. \\
\hline
1. Generate two random and distinct primes $p$ and $q$ \\ \ \ \ \ satisfying $p\equiv 3 \pmod 4$ and $q\equiv 3 \pmod 4$ \\
2. Compute $N= p^2q$. \\
3. Return the public key $N$ and the private key $p$.\\
\hline
\end{tabular}
\\ 
\\ \\
\begin{tabular}{l}
Algorithm 2. \textbf{Encryption}\\
\hline
INPUT: The plaintext $m$ and the public key $N$.\\
OUTPUT: A ciphertext $c$.\\
\hline
1. Choose plaintext $m<2^{2k-1}$ such that $\gcd(m,N)=1$.\\
2. Compute $c \equiv m^2 \pmod N$.\\
3. Return the ciphertext $c$.\\
\hline
\end{tabular}

\begin{remark} \label{remark1}
We observe that the message space is restricted to the range $m<2^{2k-1}$. It shows that the message $m<2^{2k-1}=\frac{2^{2k}}{2}<\frac{p^2}{2}<p^2$.
\end{remark}
\
\begin{tabular}{l}
Algorithm 3. \textbf{Decryption}\\
\hline
INPUT: A ciphertext $c$ and the private key $p$.\\
OUTPUT: The plaintext $m$\\
\hline
1. Compute $w\equiv c \pmod p$.\\
2. Compute $m_p\equiv w^{\frac{p+1}{4}} \pmod p$.\\
3. Compute $i= \frac{c-{m_p}^2}{p}$.\\
4. Compute $j\equiv \frac{i}{2m_p} \pmod p$.\\
5. Compute $m_1=m_p+jp$.\\
6. If $m_1<\frac{p^2}{2}$, then return $m=m_1$.\\
7. Else return $m=p^2-m_1$.\\
\hline
\end{tabular}
\\

\begin{remark} \label{remark2}
The decryption algorithm needs only a single prime number as its key and it operates with single modular exponentiation operation. This situation would give impact on the overall computational advantage of the proposed scheme against other Rabin variants.
\end{remark}

\subsection{\textbf{Proof of Correctness}}

Suppose $c\equiv m^2 \pmod {p^2q}$ then $c-m^2 \equiv 0 \pmod {p^2q}$. Note that, if $c-m^2$ is divisible by $p^2q$ then it is certainly divisible by $p^2$. From Remark \ref{remark1} the message $m<2^{2k-1}=\frac{2^{2k}}{2}<\frac{p^2}{2}<p^2$. Thus, it is sufficient to solve for $c\equiv m^2 \pmod {p^2}$ as stated in the following lemma.

\begin{lemma} \label{soln}
	Let $c \equiv m^2 \pmod {p^2}$ then there exist two distinct square roots of $c$; i.e. $m_1$ and $m_2$ such that $m_1\neq m_2$  and ${m_1}^2 \equiv {m_2}^2 \equiv c \pmod {p^2}$.
\end{lemma}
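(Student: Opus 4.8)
The plan is to exhibit the two square roots explicitly and then verify they are distinct modulo $p^2$. First I would recall the classical fact that for a prime $p\equiv 3\pmod 4$, if $w$ is a quadratic residue mod $p$ then $w^{(p+1)/4}$ is a square root of $w$ modulo $p$; this handles the ``mod $p$'' layer. Writing $c\equiv m^2\pmod{p^2}$, reduce mod $p$ to get $w\equiv c\pmod p$ and set $m_p\equiv w^{(p+1)/4}\pmod p$, so that $m_p^2\equiv c\pmod p$. The goal is to lift $m_p$ to a square root modulo $p^2$ via a Hensel-type step: seek $m_1=m_p+jp$ with $m_1^2\equiv c\pmod{p^2}$. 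Expanding, $m_1^2=m_p^2+2jpm_p+j^2p^2\equiv m_p^2+2jpm_p\pmod{p^2}$, so the congruence becomes $2jpm_p\equiv c-m_p^2\pmod{p^2}$. Since $p\mid c-m_p^2$, write $i=(c-m_p^2)/p$ (an integer), divide through by $p$ to obtain $2jm_p\equiv i\pmod p$, and solve $j\equiv i/(2m_p)\pmod p$ — which is legitimate because $\gcd(m,N)=1$ forces $p\nmid m_p$, and $p$ is odd so $2$ is invertible. This recovers exactly the quantities computed in Algorithm 3 and shows $m_1$ is one square root.

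Next I would produce the second root simply as $m_2=p^2-m_1$, i.e. $m_2\equiv -m_1\pmod{p^2}$; clearly $m_2^2\equiv m_1^2\equiv c\pmod{p^2}$. It remains to argue $m_1\neq m_2$, equivalently $m_1\not\equiv -m_1\pmod{p^2}$, i.e. $2m_1\not\equiv 0\pmod{p^2}$. Since $p$ is odd, this is equivalent to $p^2\nmid m_1$, which follows from $\gcd(m,N)=1$: if $p\mid m_1$ then $p\mid m_1^2\equiv c\equiv m^2\pmod{p^2}$ would force $p\mid m$, a contradiction. One should also note these are the \emph{only} two square roots: $\mathbb{Z}_{p^2}^{*}$ is cyclic of even order, so $x^2\equiv c$ has exactly two solutions when it has any, namely $\pm m_1$.

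I do not expect a serious obstacle here — the statement is essentially Hensel lifting of a square root modulo an odd prime power, and everything is forced. The one point requiring a little care is justifying each division: $2m_p$ must be invertible mod $p$ (true since $p$ is odd and $p\nmid m_p$), and $i=(c-m_p^2)/p$ must genuinely be an integer (true since $m_p^2\equiv c\pmod p$). A secondary subtlety worth a sentence is that the value $m_p$ produced by step 2 is only well-defined modulo $p$, so $m_1=m_p+jp$ is well-defined modulo $p^2$ regardless of the representative chosen; and that the ``$m_1<p^2/2$ or $m_2=p^2-m_1<p^2/2$'' dichotomy in Algorithm 3 is exactly the device that later (combined with Remark~\ref{remark1}, $m<p^2/2$) pins down the unique correct plaintext — though that uniqueness is the subject of the following result, not this lemma.
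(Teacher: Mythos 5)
Your proof is correct, but it follows a genuinely different route from the paper's. The paper proves this lemma by a difference-of-squares argument: assuming $m_1^2\equiv m_2^2\equiv c\pmod{p^2}$, it factors $(m_1+m_2)(m_1-m_2)\equiv 0\pmod{p^2}$, uses $\gcd(c,p)=1$ to rule out $p$ dividing both $m_1$ and $m_2$ (hence $p$ cannot divide both factors), and concludes that $p^2$ divides exactly one factor, so any two square roots satisfy $m_1\equiv \pm m_2\pmod{p^2}$ --- in effect a classification ``at most two, and they differ by sign,'' with existence being immediate since $\pm m$ are themselves roots. You instead construct a root explicitly via $m_p\equiv w^{(p+1)/4}\pmod p$ and a Hensel-type lift $m_1=m_p+jp$ with $j\equiv i/(2m_p)\pmod p$, take $m_2=p^2-m_1$, and prove distinctness from $p\nmid m_1$; this is essentially the content the paper defers to its later Lemma~\ref{correct} (Correctness), so your argument front-loads the algorithmic construction that justifies Algorithm~3, while the paper keeps this lemma elementary and purely structural. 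Your closing remark that $\mathbb{Z}_{p^2}^{*}$ being cyclic forces exactly two solutions recovers the paper's ``up to sign'' classification by a different (group-theoretic) means. Both arguments rely on the same implicit hypothesis, inherited from the encryption algorithm, that $\gcd(m,N)=1$ so that $p\nmid c$; you state this where you invert $2m_p$ and where you prove $m_1\not\equiv -m_1\pmod{p^2}$, just as the paper invokes $\gcd(c,p)=1$ in its divisibility step.
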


\begin{proof}
Suppose $m_1\neq m_2$ such that ${m_1}^2 \equiv {m_2}^2 \equiv c \pmod {p^2}$. Then
\begin{eqnarray}
{m_1}^2-{m_2}^2 \equiv (m_1+m_2)(m_1-m_2) \equiv 0 \pmod {p^2}
\end{eqnarray}
	
Note that $p^2|(m_1+m_2)(m_1-m_2)$, thus consider $p|(m_1+m_2 )(m_1-m_2 )$ as well. If $p|(m_1+m_2)$ and $p|(m_1-m_2)$, then $p$ would divide $(m_1+m_2 )+(m_1-m_2 )=2m_1$ and $(m_1+m_2 )-(m_1-m_2 )=2m_2$. Since $p \equiv 3 \pmod 4$ is an odd prime, then $p\nmid 2$ so $p$ would divide both $m_1$ and $m_2$. Consider ${m_1}^2 \equiv c \pmod {p^2}$ thus ${m_1}^2=c+lp^2$ for some integer $l$. If $p|m_1$ then $p|{m_1}^2$ therefore $p|c$. Observe that $\gcd(c,p)=1$ therefore $p\nmid c$. Hence $p\nmid m_1$. The same goes for $p\nmid m_2$. 

Now, consider in the case if $p|(m_1+m_2)$ or $p|(m_1-m_2)$ but not both. Since $p^2|(m_1+m_2 )(m_1-m_2 )$, therefore either $ p^2|(m_1+m_2)$ or $p^2|(m_1-m_2)$. This concludes $m_1 \equiv m_2 \pmod {p^2}$ and $m_1 \equiv -m_2 \pmod {p^2}$. \qed
\end{proof}

\begin{theorem}
If $m<2^{2k-1}$ such that $c \equiv m^2 \pmod N$ , then the Algorithm 3 (Decryption) will output the unique $m$.
\end{theorem}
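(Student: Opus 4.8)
The plan is to reduce the whole problem to extracting a square root modulo $p^2$, exactly as in the discussion preceding Lemma~\ref{soln}. Since $c\equiv m^2\pmod N$ forces $c\equiv m^2\pmod{p^2}$, and since $m<2^{2k-1}<p^2/2<p^2$ by Remark~\ref{remark1}, recovering $m$ is the same as singling out the unique square root of $c$ modulo $p^2$ lying in $[0,p^2/2)$. By Lemma~\ref{soln} there are exactly two square roots of $c$ modulo $p^2$, and they are negatives of each other; writing one of them as $m_1\in\{0,\dots,p^2-1\}$, the other is $p^2-m_1$. So the skeleton of the argument is: (i) Steps~1--5 genuinely compute such an $m_1$; (ii) $m$ equals either $m_1$ or $p^2-m_1$; (iii) the range test in Steps~6--7 picks out the right one, and it is unique.

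For (i), I would first check that Step~2 yields a square root of $c$ modulo $p$. Indeed $w\equiv c\equiv m^2\pmod p$, hence $m_p^2\equiv w^{(p+1)/2}\equiv m^{p+1}\equiv m^2\equiv c\pmod p$, using $p\equiv3\pmod4$, Fermat's little theorem, and $\gcd(m,p)=1$ (which follows from $\gcd(m,N)=1$). In particular $p\nmid m_p$, so $p\mid c-m_p^2$ makes the integer $i$ of Step~3 well defined, and $2m_p$ is invertible modulo $p$, making $j$ in Step~4 well defined. Then, with $m_1=m_p+jp$, expanding gives $m_1^2\equiv m_p^2+2m_pjp\pmod{p^2}$, so the congruence $m_1^2\equiv c\pmod{p^2}$ is equivalent, after dividing by $p$, to $2m_pj\equiv(c-m_p^2)/p=i\pmod p$, i.e.\ to $j\equiv i/(2m_p)\pmod p$ --- precisely Step~4. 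Taking $m_p\in\{0,\dots,p-1\}$ and $j\in\{0,\dots,p-1\}$ yields $0\le m_1\le p^2-1$, so $m_1$ is one of the two roots from Lemma~\ref{soln}.

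For (ii) and (iii): $m$ is a square root of $c$ modulo $p^2$ with $0\le m<2^{2k-1}<p^2/2$, so by Lemma~\ref{soln} it is either $m_1$ or $p^2-m_1$. Because $p$ is odd, $p^2$ is odd, hence $m_1\ne p^2-m_1$ and exactly one of the two lies in $[0,p^2/2)$; also neither is $0$, since $\gcd(c,p)=1$. Therefore $m$ is forced to be whichever of $m_1,\,p^2-m_1$ lies in $[0,p^2/2)$, and that is exactly the value returned by Steps~6--7. This simultaneously yields uniqueness: no other plaintext in the admissible range can produce the same $c$.

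The main obstacle I expect is not any single computation but the bookkeeping around the sign ambiguity $m_p\equiv\pm m\pmod p$ coming out of Step~2: one must confirm that the Hensel lift is insensitive to which sign $w^{(p+1)/4}\bmod p$ happens to produce --- replacing $m_p$ by $p-m_p$ merely swaps $m_1$ with $p^2-m_1$, leaving the pair $\{m_1,\,p^2-m_1\}$ unchanged, so Steps~6--7 still return $m$ --- and that every division used ($(c-m_p^2)/p$, and the inverse of $2m_p$ modulo $p$) is legitimate, both of which rest on $p$ being an odd prime with $\gcd(m,p)=1$. Everything else is routine congruence manipulation.
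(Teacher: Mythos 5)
Your proof is correct and follows essentially the same route as the paper, which likewise splits the argument into a Hensel-lift correctness step (showing Steps 1--5 produce a root $m_1$ of $c$ modulo $p^2$, with $p^2-m_1$ the other root, so $m_1+m_2=p^2$) and a uniqueness step (exactly one root lies below $p^2/2$, which Steps 6--7 select). Your added checks --- that Step 2 really yields a square root of $c$ modulo $p$ via $p\equiv 3\pmod 4$ and Fermat's little theorem, and that the sign ambiguity in $m_p$ only swaps $m_1$ with $p^2-m_1$ --- tighten details the paper's lemmas leave implicit, but do not change the approach.
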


\begin{proof}
From Lemma \ref{soln}, there are two distinct solution for $c \equiv m^2 \pmod N$. Now, we show that the decryption output by the Algorithm 3 is correct and produces only a unique solution for $m<2^{2k-1}$. We shall break down the proof in two separate lemmas as follows.
\end{proof}

\begin{lemma}[Correctness] \label{correct}
Suppose $m_p$ is a solution to $w \equiv c \pmod p$. Let $j$ be a solution to $2m_pj \equiv i \pmod p$ such that $i= \frac{c-{m_p}^2}{p}$. Then $m=m_p+jp$ is a solution to $c \equiv m^2 \pmod {p^2}$. Furthermore $-m \pmod {p^2}$ is another solution.
\end{lemma}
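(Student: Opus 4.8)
The plan is to verify the claim by a direct Hensel-style lifting computation. First I would make explicit the fact implicit in the hypothesis, namely that $m_p$ is genuinely a square root of $c$ modulo $p$: since $c\equiv m^2\pmod N$ we have $w\equiv c\pmod p$ with $c$ a quadratic residue mod $p$, and Euler's criterion together with $p\equiv 3\pmod 4$ gives $m_p^2\equiv w^{(p+1)/2}\equiv w\equiv c\pmod p$. In particular $c\equiv m_p^2\pmod p$, so $i=\frac{c-m_p^2}{p}$ is a well-defined integer, which is what makes Step 3 of Algorithm 3 meaningful.

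Next I would check that $j$ in Step 4 is well-defined. Because $\gcd(c,p)=1$ (enforced in encryption) and $m_p^2\equiv c\pmod p$, we get $p\nmid m_p$; since $p$ is odd, $\gcd(2m_p,p)=1$, so $2m_p$ is invertible modulo $p$ and the congruence $2m_pj\equiv i\pmod p$ has a unique solution $j\in\{0,\dots,p-1\}$. Equivalently, $2m_pj=i+tp$ for some integer $t$.

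The core of the argument is then one line of algebra. Writing $m=m_p+jp$ and expanding,
\[
m^2=m_p^2+2m_pjp+j^2p^2\equiv m_p^2+2m_pjp\pmod{p^2}.
\]
Substituting $2m_pj=i+tp$ gives $2m_pjp=ip+tp^2\equiv ip\pmod{p^2}$, hence $m^2\equiv m_p^2+ip\pmod{p^2}$; finally $m_p^2+ip=m_p^2+(c-m_p^2)=c$, so $m^2\equiv c\pmod{p^2}$, as required. For the last sentence, $(-m)^2=m^2\equiv c\pmod{p^2}$ is immediate, and $-m\not\equiv m\pmod{p^2}$ since otherwise $p^2\mid 2m$, contradicting $p\nmid m$ (because $m\equiv m_p\pmod p$ and $p\nmid m_p$) together with $p\nmid 2$; this exhibits exactly the two distinct roots promised by Lemma \ref{soln}.

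I do not expect a genuine obstacle here: the only points worth stating carefully are why $i$ is an integer and why $2m_p$ is invertible mod $p$, both of which reduce to $\gcd(c,p)=1$ and $p\equiv 3\pmod 4$; the remainder is the standard lift of a root from modulus $p$ to modulus $p^2$. The one thing I would be careful about is not overclaiming: this lemma only produces the \emph{two} square roots modulo $p^2$, so the selection of the correct one must be deferred to the companion lemma that uses the range condition $m<2^{2k-1}$, and I would flag that explicitly so the overall uniqueness argument stays clean.
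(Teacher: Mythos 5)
Your proposal is correct and follows essentially the same Hensel-style lifting computation as the paper: expand $(m_p+jp)^2\equiv m_p^2+2m_pjp\pmod{p^2}$, use $c-m_p^2=ip$, and reduce to $2m_pj\equiv i\pmod p$. If anything, your version is logically tighter, since you run the argument in the verification direction and explicitly justify that $i$ is an integer, that $2m_p$ is invertible modulo $p$, and that $-m\not\equiv m\pmod{p^2}$ — points the paper's proof (which works backwards from assuming $m=m_p+jp$ is a solution) leaves implicit.
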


\begin{proof}
Suppose we are given the ciphertext $c$ as described in the encryption algorithm, and need to solve for its square root modulo $N=p^2q$. Let $c \equiv m^2 \pmod N$, and since $p|N$, then we have $w \equiv m^2 \equiv c \pmod p$. From here, since $m<p^2$ thus it is sufficient just solving $w\equiv c \pmod p$.
We begin by solving $w \equiv c \pmod p$. Let $m_p$ is a solution to $w \equiv c \pmod p$ such that $w \equiv {m_p}^2 \pmod p$. It thus suffices to find for $m_p$ the values $m=m_p+jp$ for some integer $j$ that we will find later. Suppose that  $m=m_p+jp$ is a solution for $w \equiv c \pmod p$, then we have
\begin{eqnarray}
c \equiv m^2 \equiv (m_p+jp)^2 \equiv {m_p}^2+2m_p jp \pmod {p^2}
\end{eqnarray}

So, the above congruence can be rearranged as $2m_pjp  \equiv c-{m_p}^2  \pmod {p^2}$. Note that from $w \equiv {m_p}^2 \pmod p$, we have $c-{m_p}^2 \equiv 0 \pmod p$ which means that $c-{m_p}^2$ is a multiple of $p$, say $ip$ for some integer $i$. From here, we could simply compute $i=\frac{c-{m_p}^2}{p}$. We then rewrite this equation as
\begin{eqnarray}
2m_pjp  \equiv ip \pmod {p^2}	
\end{eqnarray}

Hence, $p$ factor immediately cancelled out from $2m_pjp \equiv ip \pmod {p^2}$ since it implies that $2m_pj \equiv i \pmod p$. Hence, we compute $j \equiv \frac{i}{2m_p} \pmod p$. 
To conclude, we have a solution $m=m_p+jp$ for $c\equiv m^2 \pmod {p^2}$. Observe that $-m \pmod {p^2}$ is also another solution. \qed
\end{proof}	
\begin{corollary}
Let $m_1$ and $m_2$ be the solution for $c \equiv m^2 \pmod {p^2}$ then $m_1+m_2=p^2$.
\end{corollary}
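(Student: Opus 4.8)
The plan is to obtain the corollary directly from Lemma~\ref{correct} together with the counting statement of Lemma~\ref{soln}. First I would recall that Lemma~\ref{correct} constructs a solution $m$ of $c \equiv m^2 \pmod{p^2}$ and at the same time exhibits $-m \pmod{p^2}$ as a second solution, while Lemma~\ref{soln} guarantees that there are exactly two distinct square roots modulo $p^2$. Consequently the unordered pair $\{m_1,m_2\}$ of solutions must be $\{\,m \bmod p^2,\ -m \bmod p^2\,\}$, so in particular $m_2 \equiv -m_1 \pmod{p^2}$.

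Next I would fix representatives. Since $\gcd(c,p)=1$, the argument inside the proof of Lemma~\ref{soln} already shows $p \nmid m_i$, hence neither root is congruent to $0$ modulo $p^2$; taking each $m_i$ in the range $\{1,\dots,p^2-1\}$ we then have $-m_1 \bmod p^2 = p^2 - m_1$. Relabelling if necessary so that $m_2 = p^2 - m_1$ gives the desired equality $m_1 + m_2 = p^2$. I would also record the small consistency check that the two roots are genuinely distinct in this range: $m_1 = m_2$ would force $2m_1 = p^2$, which is impossible because $p\equiv 3\pmod 4$ is odd and so is $p^2$.

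I do not anticipate a real obstacle, since the statement is essentially a repackaging of facts already established; the only point needing a word of care is the step from the congruence $m_2 \equiv -m_1 \pmod{p^2}$ to the integer \emph{equation} $m_1+m_2 = p^2$, which is exactly where the convention of representing the two square roots as integers in $\{1,\dots,p^2-1\}$ is used. This is also the form in which the corollary feeds into the decryption algorithm, where Step~7 returns $p^2 - m_1$ precisely to select whichever of the two roots is the intended plaintext.
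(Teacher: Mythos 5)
Your proposal is correct and follows essentially the same route as the paper: both take the two roots $\pm m \pmod{p^2}$ from Lemma~\ref{correct} and write the second one as $p^2 - m$, giving $m_1 + m_2 = p^2$. Your extra care in fixing representatives in $\{1,\dots,p^2-1\}$ (using $p \nmid m_i$) and checking $m_1 \neq m_2$ only makes explicit what the paper leaves implicit.
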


\begin{proof}
Suppose $\pm m \pmod {p^2}$ are the solution for $c\equiv m^2 \pmod {p^2}$ as shown by Lemma \ref{correct}. Observe that if we set $m_1\equiv m \pmod {p^2}$ then $m_2 \equiv -m \pmod {p^2}$. Thus, $m_2$ can be simply written as $p^2-m$. Conclusively, $m_1+m_2=p^2$. \qed
\end{proof}
\noindent Now, the following lemma shows that the decryption algorithm will output the unique solution $m$.	

\begin{lemma}[Uniqueness]
Let $m<2^{2k-1}$. Then the decryption algorithm will output the unique $m$.
\end{lemma}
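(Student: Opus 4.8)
The plan is to combine Lemma~\ref{correct} and its Corollary with the size bound recorded in Remark~\ref{remark1}. By Lemma~\ref{correct}, the congruence $c \equiv m^2 \pmod{p^2}$ has exactly the two solutions $m_1 = m_p + jp$ and $m_2 \equiv -m_1 \pmod{p^2}$, and by the Corollary their representatives in $\{1,\dots,p^2-1\}$ satisfy $m_1 + m_2 = p^2$. Steps 5--7 of the decryption algorithm compute $m_1$ and then return $m_1$ when $m_1 < \frac{p^2}{2}$ and $p^2 - m_1 = m_2$ otherwise; thus the algorithm outputs precisely whichever of the two square roots lies in the range $0 < x < \frac{p^2}{2}$.

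First I would check that exactly one of $m_1, m_2$ lies in that range. Since $p \equiv 3 \pmod 4$ is odd, $p^2$ is odd, so $\frac{p^2}{2}$ is not an integer and cannot equal $m_1$ or $m_2$; also $m_1 \neq m_2$ by Lemma~\ref{soln}. Then $m_1 + m_2 = p^2$ forces one of them to be strictly below $\frac{p^2}{2}$ and the other strictly above, so the branching in steps 6--7 is both exhaustive and unambiguous.

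Next I would identify this distinguished value with the genuine plaintext. Since $c \equiv m^2 \pmod N$ and $p^2 \mid N$, the message $m$ is a square root of $c$ modulo $p^2$, hence $m \in \{m_1, m_2\}$. By Remark~\ref{remark1}, $m < 2^{2k-1} = \frac{2^{2k}}{2} < \frac{p^2}{2}$, and $m$ was chosen in $\{1,\dots,p^2-1\}$ to begin with, so $m$ is the unique square root lying in $0 < x < \frac{p^2}{2}$. Therefore the value returned by the algorithm equals $m$, which gives both the correctness of the output and its uniqueness.

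I do not expect a genuine obstacle here; the only points requiring care are invoking the oddness of $p^2$ so that the threshold $\frac{p^2}{2}$ lies strictly between the two roots (making the dichotomy in steps 6--7 clean), and applying the bound of Remark~\ref{remark1} to $m$ already reduced into $\{1,\dots,p^2-1\}$, which holds automatically because the plaintext is sampled from that range.
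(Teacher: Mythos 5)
Your proposal is correct and follows essentially the same route as the paper: both rest on the relation $m_1+m_2=p^2$ from the corollary, the oddness of $p^2$ (so neither root equals $\frac{p^2}{2}$ and exactly one lies below it), and the bound $m<2^{2k-1}<\frac{p^2}{2}$ from Remark~\ref{remark1} to single out the plaintext. Your write-up is in fact a bit tighter than the paper's, which inserts a superfluous $\epsilon_1,\epsilon_2$ argument and is less explicit about identifying the returned root with the genuine plaintext $m$.
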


\begin{proof}
Observe that the upper bound for $m$ is $2^{2k-1}-1< \frac{p^2}{2}$. Consider $m_1+m_2=p^2$ with $p^2$ is an odd integer. Then either $m_1$ or $m_2$ is less than $\frac{p^2}{2}$ that satisfies the upper bound of $m<2^{2k-1}$. Observe that $p^2$ is an odd integer, then by definition $\frac{p^2}{2}$ is not an integer. Since that $m_1$ and $m_2$ need to be integers, thus $m_1,m_2\neq \frac{p^2}{2}$.
\

Suppose we consider both $m_1$ and $m_2$ are less  $\frac{p^2}{2}$, then we should have $m_1+m_2<p^2$ therefore we have a contradiction (i.e the fact that $m_1+m_2=p^2$). On the other hand, if we consider both $m_1$ and $m_2$ are greater than $\frac{p^2}{2}$, then we should have $m_1+m_2>p^2$ yet we reach the same contradictory statement.Thus, one of $m_1$ or $m_2$ must be less than $\frac{p^2}{2}$.
\
 
Suppose $m_1<\frac{p^2}{2}$ then there must exist a real number $\epsilon_1$ such that $m_1  + \epsilon_1  =  \frac{p^2}{2}$. On the other site, since we let $m_1<\frac{p^2}{2}$, then $m_2$ must be greater than $\frac{p^2}{2}$. Suppose $m_2>\frac{p^2}{2}$ then there must exist a real number $\epsilon_2$ such that $m_2 - \epsilon_2  =  \frac{p^2}{2}$. If we add up these two equations, we should have
\begin{eqnarray}
(m_1 + \epsilon_1)+(m_2 - \epsilon_2)= \frac{p^2}{2} + \frac{p^2}{2}=p^2	
\end{eqnarray}

But since we have $m_1+m_2=p^2$, thus $(\epsilon_1- \epsilon_2 )$ should be equal to zero, meaning that $\epsilon_1=\epsilon_2$. Finally, we conclude that only one of $m_1$ or $m_2$ are less than $\frac{p^2}{2}$ and will be outputted by the decryption algorithm as the uniquem. \qed
\end{proof}

\section{Analysis and Discussion}

\subsection{Equivalent to Factoring $N=p^2q$}
If a new cryptosystem is designed then we are expected to provide a comparison of the relative difficulty of breaking the scheme to the solving of any existing hard problems. In this case is breaking the proposed scheme is reducible to factoring the modulus $N=p^2q$? We may put this as if there exists an algorithm to factor the modulus $N=p^2q$ then there exists an algorithm to find the message $m$ from the ciphertext $c$ output by the proposed scheme. The converse is also true, as will be exemplified in the following theorem.

\begin{theorem} \label{factoring}
Breaking the Rabin-$p$ cryptosystem scheme is equivalence to factoring the modulus $N=p^2q$.
\end{theorem}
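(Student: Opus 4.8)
The plan is to prove the two implications separately: the forward direction is routine, and the reverse direction is the classical Rabin-style reduction adapted to the modulus $N=p^2q$.

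For \emph{factoring $\Rightarrow$ breaking}: an algorithm that factors $N=p^2q$ recovers the prime $p$ occurring with multiplicity two (and $q=N/p^2$), and feeding $p$ as the private key into Algorithm~3 decrypts any ciphertext. Nothing more is needed here.

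For the converse, \emph{breaking $\Rightarrow$ factoring}, I would model a breaker as an algorithm $\mathcal{D}$ that on input a ciphertext $c$ returns what legitimate decryption returns; by the correctness of Algorithm~3 this output is the unique integer $m\in(0,p^2/2)$ with $m^2\equiv c\pmod{p^2}$, and inspection of the steps of Algorithm~3 shows that this value depends only on $c\bmod p^2$ and is produced correctly whenever $\gcd(c,p)=1$ and $c$ is a quadratic residue modulo $p$. To factor $N$, sample $r$ uniformly at random from $\mathbb{Z}_N^{*}$, set $c\equiv r^2\pmod N$ --- which is a quadratic residue modulo $p$ and coprime to $p$, hence an admissible input --- and call $m\leftarrow\mathcal{D}(c)$. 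Then $m^2\equiv c\equiv r^2\pmod{p^2}$, so $p^2\mid(m-r)(m+r)$, and arguing exactly as in Lemma~\ref{soln} (using that $p$ is odd and $p\nmid r$) shows that $p^2$ divides exactly one of $m-r$ and $m+r$ while the other is coprime to $p$. The reduction then computes $\gcd(m-r,N)$ and $\gcd(m+r,N)$; one of these equals $p^2$, which reveals $p$ and $q=N/p^2$.

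The only delicate point --- and the step I expect to be the main obstacle --- is confirming that the relevant gcd is exactly $p^2$, and not the useless values $1$ or $N$, with overwhelming probability. Writing $r=r_0+tp^2$ with $r_0=r\bmod p^2$ and $0\le t\le q-1$, one checks that $r_0<p^2/2$ forces $m=r_0$ and $m-r=-tp^2$, whereas $r_0>p^2/2$ forces $m=p^2-r_0$ and $m+r=(t+1)p^2$ (the case $r_0=p^2/2$ being impossible since $p^2$ is odd). Hence the relevant gcd equals $p^2\gcd(t,q)$ or $p^2\gcd(t+1,q)$, which is exactly $p^2$ unless $t=0$ (respectively $t=q-1$); since $t$ is essentially uniform over $q$ values, this bad event occurs with probability $O(1/q)$, negligible in the security parameter and further suppressible by repeating with fresh $r$. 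Combining the two implications yields the claimed equivalence.
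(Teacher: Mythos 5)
Your proposal is correct, and its skeleton matches the paper's: the forward direction is the trivial ``factor, then run Algorithm~3'' observation, and the reverse direction is a gcd-based reduction that squares a known value, queries the decryptor, and extracts $p^2$ from $\gcd(\text{known root}\pm\text{returned root},N)$ --- exactly the content of the paper's Algorithm~4. The difference is in rigor rather than in strategy. The paper's proof of the theorem itself is essentially a restatement of the claim in both directions, and Algorithm~4 fixes a deterministic query $\hat{m}$ with $2^{2k-1}<\hat{m}<2^{2k}-1$ and simply asserts $\gcd(\hat{m}\pm m^\prime,N)=p^2$; it never addresses the case in which the chosen $\hat{m}$ happens to lie below $p^2/2$, where the decryptor returns $m^\prime=\hat{m}$ and both gcds collapse to $N$ and (generically) $1$, so the reduction as written can fail without notice. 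Your version repairs precisely this: by sampling $r$ uniformly from $\mathbb{Z}_N^{*}$, writing $r=r_0+tp^2$, and checking that the relevant gcd is $p^2\gcd(t,q)$ or $p^2\gcd(t+1,q)$, you isolate the only bad events ($t=0$, respectively $t=q-1$), bound their probability by $O(1/q)$, and note that repetition drives the failure probability down --- an analysis the paper omits entirely. You are also more explicit about what ``breaking'' means (an oracle reproducing Algorithm~3's output on any admissible $c$ with $\gcd(c,p)=1$), which is the same implicit model the paper uses but never states. So: same reduction idea as the paper, with a correct and more complete failure analysis that the paper's Algorithm~4 genuinely needs.
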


\begin{proof}
We show the proof for both directions as follows.	
\\

\noindent ($\Rightarrow$) Breaking the Rabin-$p$ cryptosystem is reducible to factoring $N=p^2q$.

\
Suppose we have an algorithm with the ability to factor the modulus $N=p^2q$, then we can solve the message $m$ from the ciphertext $c$ output by the proposed scheme simply by following the outlined decryption algorithm. Therefore the proposed scheme is reducible to factoring.
\\ \\
\noindent ($\Leftarrow$) Factoring $N=p^2q$ is reducible to breaking the Rabin-$p$ cryptosystem.

\
Conversely, suppose there exists an algorithm that break the proposed scheme; that is able to find the message $m$ from the ciphertext $c$ then there exists an algorithm to solve the factorization of the modulus $N=p^2q$. Implying that someone who can decrypt the message $m$ from the ciphertext $c$ must also be able to factor $N=p^2q$. \qed
\end{proof}

\noindent The factoring algorithm is defined as follows.
\\ \\
\begin{tabular}{l}
Algorithm 4. \textbf{Factoring Algorithm}\\
\hline
INPUT: A ciphertext $c$ and the modulus $N$.\\
OUTPUT: The prime factors $p^2$,$q$.\\
\hline
1. Choose an integer $2^{2k-1}<\hat{m}<2^{2k}-1$ \\
2. Compute $\hat{c} \equiv \hat{m}^2 \pmod N$.\\
3. Ask the decryption of ciphertext $\hat{c}$\\
4. Receive the output $m^\prime<2^{2k-1}$.\\
5. Compute $\gcd(\hat{m}\pm m^\prime, N)= p^2$.\\
6. Compute $\frac{N}{p^2}=q$.\\
7. Return the prime factors $p^2,q$.\\
\hline
\end{tabular}
\\
\subsection{Coppersmith's Theorem}
Coppersmith [6] introduced a significantly powerful theorem for ﬁnding small roots of modular polynomial equations using the LLL algorithm. In general, finding solutions to modular equations is easy if we know the factorization of the modulus. Else, it can be hard. When working with modulo a prime, there is no reason to use Coppersmith’s theorem since there exist far better root-finding algorithm (i.e Newton method). Since then, this method has found many different applications in the area of public key cryptography. Interestingly, besides many cryptanalytic results it has likewise been employed in the design of provably secure cryptosystems. The method has found many different applications in the area of public key cryptography, for example, such as to factor $N=pq$ if the high bits of $p$ are known, and attacking stereotyped messages in RSA by sending messages to whose differences is less than $N^{1/e}$. Now we provide the Coppersmith's theorem.

\begin{theorem}\textbf{[6]} \label{copp1}
Let $N$ be an integer of unknown factorization. Let $f_N(x)$ be a univariate, a monic polynomial of degree $\delta$. Then we can find all solutions $x_0$ for the equation $f_N(x) \equiv 0 \pmod N$ with $|x_0|< N^{1/\delta}$ in polynomial time.
\end{theorem}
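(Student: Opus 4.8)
\noindent The plan is to follow the lattice-reduction strategy of Howgrave-Graham, which trades the modular equation for an honest integer equation. I would fix an auxiliary integer $m$ (to be optimised as a function of $\delta$) and let $X$ be a bound slightly below $N^{1/\delta}$, so that every target root obeys $|x_0|\le X$. The engine is Howgrave-Graham's lemma: if $g(x)=\sum_i a_i x^i$ has at most $w$ nonzero terms, $|x_0|\le X$, $g(x_0)\equiv 0 \pmod{N^m}$, and the coefficient vector of $g(xX)$ has Euclidean norm strictly below $N^m/\sqrt{w}$, then $g(x_0)=0$ over $\ZZ$; this follows in one line from $|g(x_0)|\le\sum_i|a_i|\,|x_0|^i\le\sum_i|a_i|X^i\le\sqrt{w}\,\|g(xX)\|<N^m$ together with $N^m\mid g(x_0)$.

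Next I would exhibit a family of polynomials that all vanish modulo $N^m$ at $x_0$, namely the shifts $g_{i,j}(x)=N^{m-1-j}x^i f_N(x)^j$ for $0\le j\le m-1$ and $0\le i\le\delta-1$, where monicity of $f_N$ is what makes this work cleanly. The $\ZZ$-span of the vectors $g_{i,j}(xX)$ is a lattice $L$ of dimension $w=\delta m$; since the degrees $i+\delta j$ run through $0,1,\dots,w-1$ exactly once, ordering the generators by degree makes the basis triangular, so $\det(L)$ is the product of its diagonal entries and a short computation yields $\det(L)=N^{\delta m(m-1)/2}X^{w(w-1)/2}$.

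Then I would invoke LLL on $L$ to get a nonzero $v$ with $\|v\|\le 2^{(w-1)/4}\det(L)^{1/w}=2^{(w-1)/4}N^{(m-1)/2}X^{(w-1)/2}$; the polynomial $g$ attached to $v$ is an integer combination of the $g_{i,j}$, hence still vanishes modulo $N^m$ at $x_0$. It remains to arrange the parameters so that this bound undercuts Howgrave-Graham's threshold $N^m/\sqrt{w}$; taking logarithms turns this into a requirement of the form $X<N^{(m+1)/(w-1)}$ times a factor that tends to $2^{-1/2}$, and since $w=\delta m$ the exponent $(m+1)/(\delta m-1)\to 1/\delta$ as $m\to\infty$, while choosing $m=O(\log N)$ keeps $w$ polynomially bounded so the $2^{(w-1)/4}$ and $\sqrt{w}$ terms contribute only $N^{o(1)}$. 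From $g(x_0)=0$ over $\ZZ$ the root $x_0$ is then recovered by ordinary integer root-finding, and constructing $L$, running LLL, and isolating roots are each polynomial in $\log N$ and $\delta$.

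The step I expect to be the real obstacle is this final balancing: getting the recoverable range up to the full $|x_0|<N^{1/\delta}$ rather than $N^{1/\delta-\e}$ for a fixed slack, since the crude LLL factor and the $\sqrt{w}$ leave behind a constant the exponent alone does not absorb. The standard fix is to cover the interval $|x_0|<N^{1/\delta}$ by $O(1)$ windows of half-width $X=cN^{1/\delta}$ for a suitable constant $c$, and for each window centred at $x_1$ to rerun the argument on the translated polynomial $f_N(x_1+x')$, which is again monic of degree $\delta$; a constant number of such calls catches every root and leaves the running time polynomial.
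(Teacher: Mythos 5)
The paper never proves this statement: it is quoted as Coppersmith's theorem directly from [6] and is only used to deduce the corollary that follows, so there is no internal proof to compare you against. Your sketch is the standard Howgrave-Graham route to Coppersmith's result (the same one used in the literature the paper cites, e.g. [2]), and it is essentially sound: the norm lemma, the triangular lattice with $\det(L)=N^{\delta m(m-1)/2}X^{w(w-1)/2}$, the LLL bound $2^{(w-1)/4}\det(L)^{1/w}$, the exponent tending to $1/\delta$ as $m\to\infty$ with $m=O(\log N)$ keeping everything polynomial, and the final covering of $|x_0|<N^{1/\delta}$ by $O(1)$ recentred windows (using that $f_N(x_1+x')$ is again monic of degree $\delta$) is exactly how the clean bound $N^{1/\delta}$ is reached. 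One internal inconsistency should be repaired: with your shift polynomials $g_{i,j}(x)=N^{m-1-j}x^i f_N(x)^j$, $0\le j\le m-1$, $0\le i\le \delta-1$, the values $g_{i,j}(x_0)$ are divisible only by $N^{m-1}$, not $N^m$, so the Howgrave-Graham threshold you must beat is $N^{m-1}/\sqrt{w}$, which yields a bound of the shape $X< c\,N^{(m-1)/(w-1)}$ rather than the $N^{(m+1)/(w-1)}$ you wrote; alternatively, use the family $g_{i,j}(x)=N^{m-j}x^i f_N(x)^j$ with $0\le j\le m$ and $w=\delta(m+1)$, which genuinely vanishes modulo $N^m$. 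Either repair changes nothing asymptotically, since $(m-1)/(\delta m-1)\to 1/\delta$ as well, so your conclusion stands.
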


\begin{corollary}
Let $c \equiv m^2 \pmod N$ from the ciphertext. If $m<2^{3n/2}$ then $m$ can be found in polynomial time.
\end{corollary}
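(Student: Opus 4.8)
The plan is to read this off as a one-line consequence of Coppersmith's theorem (Theorem~\ref{copp1}) applied to the squaring polynomial. First I would set $f_N(x) = x^2 - c$, which is a univariate monic polynomial of degree $\delta = 2$, and observe that the hypothesis $c \equiv m^2 \pmod N$ says exactly that $m$ is a solution of $f_N(x) \equiv 0 \pmod N$. Theorem~\ref{copp1} then guarantees that \emph{all} solutions $x_0$ of this congruence with $|x_0| < N^{1/\delta} = N^{1/2}$ are produced in polynomial time by the LLL-based procedure; in particular $m$ is among them as soon as $m < N^{1/2}$.

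The only real work is the size bookkeeping that turns the additive bound $m < 2^{3n/2}$ into $m < N^{1/2}$. From the key generation (Algorithm~1) together with Remark~\ref{remark1} we have $p^2 > 2^{2k}$, hence $p > 2^{k}$, and $q$ is a prime of the same order; writing $n$ for the bit length of the primes so that $p \ge 2^{n}$ and $q \ge 2^{n}$, we get
\begin{equation}
N = p^2 q \ \ge\ 2^{2n}\cdot 2^{n} \ =\ 2^{3n},
\end{equation}
and therefore $N^{1/2} \ge 2^{3n/2}$. Consequently any message satisfying $m < 2^{3n/2}$ also satisfies $m < N^{1/2}$, so it lies inside the range where Theorem~\ref{copp1} recovers it in polynomial time; this establishes the corollary.

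I do not expect a genuine obstacle: the statement is essentially Theorem~\ref{copp1} specialized to $\delta = 2$, and the lattice machinery is entirely inherited from Coppersmith. The single point requiring care is the conversion between the bit-length bound $2^{3n/2}$ and the arithmetic bound $N^{1/2}$ — i.e.\ confirming $N \ge 2^{3n}$ from the prescribed parameter sizes — after which the conclusion is immediate. I would also add a sentence noting that this does not endanger the proposed parameters, since the scheme's admissible message range $m < 2^{2k-1} \approx N^{2/3}$ sits comfortably above $N^{1/2}$; the corollary merely quantifies how short a plaintext would have to be for the lattice attack to bite.
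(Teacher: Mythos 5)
Your proposal matches the paper's own argument: both apply Theorem~\ref{copp1} to the monic degree-$2$ polynomial $f_N(x)=x^2-c$ and translate the bound $N^{1/2}$ into $2^{3n/2}$ via $N=p^2q\approx 2^{3n}$. Your explicit bookkeeping that $N\ge 2^{3n}$ (hence $N^{1/2}\ge 2^{3n/2}$) only makes precise what the paper states as an approximation, so the route is essentially identical.
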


\begin{proof}
Suppose $c\equiv m^2 \pmod N$. Consider the univariate, monic polynomial $f_N(x)\equiv x^2-c \equiv 0 \pmod N$, hence $\delta=2$.Thus, by applying Theorem \ref{copp1} the root $x_0=m$ can be recovered if $m<N^{1/\delta}=N^{1/2}\approx 2^{3n/2}$. Therefore, to avoid this attack, we need to set $m>2^{3n/2}$. \qed
\end{proof}

\begin{theorem} \textbf{[2]} \label{copp2}
Let $N$ be an integer of unknown factorization, which has a divisor $b>N^\beta$ . Furthermore, let $f_b(x)$ be a univariate, a monic polynomial of degree $\delta$.Then we can find all solutions $x_0$ for the equation $f_b(x)\equiv 0 \pmod b$ with $|x_0|<\frac{1}{2}N^{\beta^2/\delta}$ in polynomial time.
\end{theorem}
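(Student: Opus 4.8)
The plan is to prove this by the lattice construction of Coppersmith in the reformulation due to Howgrave-Graham. Fix a parameter $m\geq 1$ to be optimized at the end, set $X=\frac{1}{2}N^{\beta^2/\delta}$, and introduce the auxiliary polynomials
\begin{eqnarray}
g_{i,j}(x)=N^{m-j}\,x^{i}\,f_b(x)^{j}\qquad (0\le j<m,\ 0\le i<\delta),
\end{eqnarray}
together with finitely many extra shift polynomials $x^{i}f_b(x)^{m}$ for $0\le i<t$. Since $b\mid N$ and $b>N^{\beta}$, each of these polynomials vanishes at $x_0$ modulo $b^{m}$: the factor $f_b(x)^{j}$ contributes $b^{j}$, the factor $N^{m-j}$ contributes $b^{m-j}$, and $f_b(x_0)^m\equiv 0\pmod{b^m}$ for the shifts. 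Thus any integer linear combination of the $g_{i,j}(x)$ and the shifts also vanishes at $x_0$ modulo $b^{m}$.

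Next I would form the lattice $L$ spanned by the coefficient vectors of $g_{i,j}(xX)$ (and of the shifts $x^{i}f_b(xX)^{m}$), ordered so that the basis matrix is triangular. Then $\det L$ is the product of its diagonal entries and has the explicit shape $N^{A(m,\delta)}X^{B(m,\delta)}$ for exponents $A,B$ read off from the construction, with lattice dimension $n$ of order $\delta m$. Running the LLL algorithm on $L$ produces, in time polynomial in $n$ and $\log N$, a nonzero short vector and hence a nonzero polynomial $h(x)$ with $h(x_0)\equiv 0\pmod{b^{m}}$ and $\|h(xX)\|\le 2^{(n-1)/4}(\det L)^{1/n}$, where the norm is the Euclidean norm of the coefficient vector.

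Then I would invoke Howgrave-Graham's lemma: if $h$ has degree $<n$, $h(x_0)\equiv 0\pmod{b^{m}}$ for some $|x_0|<X$, and $\|h(xX)\|<b^{m}/\sqrt{n}$, then in fact $h(x_0)=0$ over $\ZZ$. Because $b^{m}>N^{\beta m}$, it suffices to guarantee $2^{(n-1)/4}\sqrt{n}\,(\det L)^{1/n}<N^{\beta m}$; substituting the explicit $\det L=N^{A}X^{B}$ turns this into a single inequality bounding the exponent of $X$, and $X<\frac{1}{2}N^{\beta^2/\delta}$ is precisely what makes it hold, the constant $\frac{1}{2}$ absorbing the $2^{(n-1)/4}\sqrt{n}$ factor and the rounding losses once $N$ is large enough. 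Since the argument applies to every root of $f_b$ modulo $b$ of absolute value below $X$, all such roots are roots of the single nonzero integer polynomial $h$, so I recover all of them by ordinary root-finding over $\ZZ$ (e.g. Newton's method), again in polynomial time.

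The main obstacle is the parameter optimization: one must choose $m$ and the number $t$ of extra shift polynomials as functions of $\delta$ and $\beta$ so that, after extracting $n$-th roots in the determinant inequality, the exponent of $N$ that survives alongside $X$ is exactly $\beta^{2}/\delta$ rather than a weaker value, and then check that the sub-exponential clutter — the $2^{(n-1)/4}$, the $\sqrt{n}$, and the discrepancy between $\delta m$ and $n$ — is dominated, which is exactly what forces the factor $\frac{1}{2}$ and a mild lower bound on $N$. The remaining ingredients (triangularity of $L$, the Howgrave-Graham bound, and polynomial-time termination of LLL and of integer root-finding) are routine.
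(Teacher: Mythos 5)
The paper does not prove this statement at all: it is quoted verbatim (as Theorem~\ref{copp2}) from May's thesis, reference [2], and used as a black box. Your sketch is the standard Coppersmith--Howgrave-Graham lattice argument, which is essentially how the cited source proves it, and the outline is sound: the shift polynomials $N^{m-j}x^i f_b(x)^j$ and $x^i f_b(x)^m$ do vanish at $x_0$ modulo $b^m$ because $b\mid N$, the triangular lattice/LLL/Howgrave-Graham chain is the right mechanism, and the factor $\tfrac12$ together with a choice such as $\e\approx 1/\log N$ (so the lattice dimension stays polynomial) is indeed what removes the usual $N^{-\e}$ loss; the only imprecision is attributing the $\tfrac12$ solely to the $2^{(n-1)/4}\sqrt{n}$ terms, which in the actual optimization are absorbed into the $\e$-bookkeeping rather than by the constant alone, but that is exactly the parameter-tuning work you already flag as the remaining step.
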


\begin{corollary}
Let $c \equiv m^2 \pmod {p^2}$ such that $p^2$ is an unknown factor for $N$. If $m<2^{2n/3}$ then $m$ can be found in polynomial time.
\end{corollary}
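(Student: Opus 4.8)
The plan is to apply the extended Coppersmith bound, Theorem~\ref{copp2}, taking the unknown divisor to be $b=p^2$. First I would pin down the parameters. Writing $n$ for the common bit-length of the primes $p$ and $q$ (as in the preceding corollary), the modulus satisfies $N=p^2q\approx 2^{3n}$ while the divisor $p^2$ has bit-length about $2n$, so $p^2>N^{\beta}$ with $\beta=2/3$. The point that makes the theorem usable even though $p$ is secret is that one need not know $b=p^2$ itself: only the lower bound $b>N^{2/3}$ is required, and that follows from the publicly advertised form $N=p^2q$ of the modulus.

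Next I would set up the polynomial exactly as in the proof of the previous corollary. Given the ciphertext $c$, let $f_b(x)=x^2-c$; this is univariate and monic of degree $\delta=2$, and by construction the plaintext $m$ is a root of $f_b(x)\equiv 0 \pmod{p^2}$ (the congruence of Lemma~\ref{soln}). Applying Theorem~\ref{copp2} with $\beta=2/3$ and $\delta=2$ then produces, in polynomial time, every root $x_0$ with $|x_0|<\frac{1}{2}N^{\beta^2/\delta}=\frac{1}{2}N^{2/9}$. It remains only to translate the exponent into bit-length terms: $N^{2/9}\approx 2^{3n\cdot 2/9}=2^{2n/3}$, so $m$ is recovered whenever $m<2^{2n/3}$, and one concludes as before that to resist this attack the scheme must avoid plaintexts smaller than $2^{2n/3}$.

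This is a short corollary and I do not expect a genuine obstacle; the one place that needs care is the exponent bookkeeping --- reading off $\beta=2/3$ correctly from the bit-lengths, substituting into $\beta^2/\delta=2/9$, and checking that $N^{2/9}$ really does match the claimed threshold $2^{2n/3}$. I would also remark that the factor $\frac{1}{2}$ appearing in Theorem~\ref{copp2} costs only a single bit and hence does not affect the statement, and that the resulting safeguard is consistent with --- indeed implied by --- the requirement $m>2^{3n/2}$ obtained from the full-modulus Coppersmith corollary.
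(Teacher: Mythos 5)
Your proposal is correct and follows essentially the same route as the paper: apply Theorem~\ref{copp2} to the monic quadratic $f_{p^2}(x)=x^2-c$ with $b=p^2>N^{\beta}$, $\beta=2/3$, $\delta=2$, yielding the bound $\frac{1}{2}N^{2/9}\approx 2^{2n/3}$. Your added bookkeeping (that only the bound $p^2>N^{2/3}$, not $p$ itself, is needed, and that the factor $\frac{1}{2}$ is immaterial) is a fair elaboration of the paper's terser argument.
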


\begin{proof}
Suppose $c\equiv m^2 \pmod {p^2}$ such that $p^2$ is an unknown factor for $N$. Consider $f_{p^2}(x) \equiv x^2-c \equiv 0 \pmod {p^2}$ with $p^2\approx N^{2/3}\approx 2^{2n}$. We can find a solution $x_0=m$ if $m< \frac{1}{2}N^{\beta^2/\delta}< N^{(2/3)^2/2}= N^{2/9} \approx 2^{2n/3}$. \qed
\end{proof}

\subsection{Chosen Ciphertext Attack}

Notice that the factoring algorithm mentioned by the Algorithm 4 could provide a way to launch a chosen ciphertext attack upon the proposed scheme in polynomial time, hence resulting in the system totally insecure in this sense. Therefore, in order to provide security against this kind of attack, we could consider implementing any hybrid technique with symmetric encryption. The result from [9] is suitable for our scheme in order to achieve chosen ciphertext security, with the cost of a hash function. We may also apply the chosen ciphertext secure hybrid encryption transformation that was proposed in [16].

\subsection{Side Channel Attack}
In general, the decryption algorithm of a Rabin based cryptosystem consists of two parts where the first part is the modular exponentiation operation in order to obtain the value of message modulo $p$ and $q$ from the ciphertext $c$. The second part then would be the recombination process using the CRT to recover the proper message $m$. Most side channel attacks deal with the first part. For instance, using a timing attack approach [21, 31 and 4] or attacking by power analysis approach [27]. Alternatively, Novak [23] proposed a side channel attack on the second part (i.e. CRT computation). The result of Novak’s attacks upon the implementation of CRT shows that for a certain characteristic function can be detected by power analysis.Hence the corresponding modulus can be factored using such characteristic function. We reason that since our proposed scheme does not need to carry out any CRT computation, thus the Novak attack is not applicable on the Rabin-$p$ cryptosystem.

\subsection{Comparison}
Table 1 illustrates a comparison between our proposed scheme and others Rabin variants. Observe that only [20] and [29] suffers from the four to one decryption drawback situation. The works [13, 14 and 18] provide unique decryption with the aid of extra bits; even so the adversary may obtain some advantage since these additional bits may leak some useful information. In addition, these schemes also use the Jacobi symbol, which we already stressed out that such implementation is indeed inefficient during encryption. Furthermore, the decryption of the other Rabin variants that applies the message redundancy or message padding methods (i.e. such as [3] or [5, 19], respectively) may fail with a small probability.
\

We claim that our proposed scheme produces a unique decryption without even using the Jacobi symbol, without sending extra bits of information, and without using any redundancy or padding onto the message. Note that as far as the Novak attacks on the computation of CRT is concerned; our proposed scheme and the method described in [11] are not susceptible. Hence, these two schemes are secure against the Novak attacks. For the computation of modular exponentiation, once again all the mentioned Rabin variants in this paper perform two modular exponentiations, except for [11] and our scheme. Nevertheless, all the exponent of the size $p$, except of the size $pq$ for [11], thus we gain an advantage if compared to such scheme.

\begin{center}
\begin{center}
Table 1. Comparison between Our Proposed Scheme \\ 
and the Other Rabin Variants.	
\end{center}
\begin{tabular}{ | l | p{2cm} | p{2cm} | p{2cm} | p{2cm} | p{2cm} | }
\hline 
Scheme & Unique Decryption & Jacobi Symbol & Special Message Structure & Novak's attacks on CRT & Modular Exponentiation in Decryption\\ \hline
[20] & No (4-to-1 mapping) &	No & No & Yes & 2 \\ \hline
[11] &	Yes &	Yes & Yes (Size Restriction) &	No & 1** \\ \hline
[13] &	Yes (With extra bits) &	Yes &	No & Yes & 2 \\ \hline
[3] &	Yes* &	No &	Yes (Redundancy) & Yes	& 2 \\ \hline
[29] &	No (4-to-1 mapping) &	No &	No &	Yes &	2 \\ \hline
[5] &	Yes* &	No &	Yes (Size Restriction, Padding) & Yes & 2 \\ \hline
[14] &	Yes (With extra bits) &	Yes &	Yes (Padding) &	Yes & 2 \\ \hline
[19] &	Yes* &	No &	Yes (Padding) &	Yes & 2 \\ \hline
[18] &	Yes (With extra bits) &	Yes (with Dedikind Sums) & No &	Yes & 2 \\ \hline
[7] &	Yes &	Yes &	No & Yes & 2 \\ \hline
Our scheme &	Yes &	No &	Yes (Size Restriction) & No & 1 \\ 
\hline 
\end{tabular}
* Decryption may fail with a small probability.\\
** All the exponent of the size $p$, except of the size $pq$ for [11].
\end{center}

\section{Conclusion}
Rabin-$p$ cryptosystem is purposely designed without using the Jacobi symbol, redundancy in the message and avoiding the demands of extra information for finding the correct plaintext. Decryption outputs a unique plaintext without any decryption failure. In addition, decryption only requires a single prime. Furthermore, the decryption procedure only computes a single modular exponentiation instead of two modular exponentiation executed by other Rabin variants. As a result, this reduces computational effort during decryption process. Some possible attacks such as Coppersmith’s technique, chosen ciphertext attack and side channel attack have been analyzed. Still, none can successfully affect the proposed strategy. Finally, we show that Rabin-$p$ cryptosystem is performs better when compared to a number of Rabin variants.


\begin{thebibliography}{99}

\bibitem{}	A. Lenstra, and E. R. Verheul, Selecting cryptographic key sizes, Journal of cryptology. 14(4) (2001), 255-293. 
\bibitem{}	A. May, New RSA vulnerabilities using lattice reduction methods,PhD diss., University of Paderborn, 2003.
\bibitem{}	A. Menezes, C. van Oorschot, and A. Vanstone,Handbook of applied cryptography, CRC Press, Washington, 1997.
\bibitem{}	D. Brumley, and D.Boneh, Remote timing attacks are practical, Computer Networks 48 (2005), 701-716.
\bibitem{}	D. Boneh, Simplified OAEP for the RSA and Rabin functions, Advances in Cryptology—CRYPTO 2001. Springer Berlin Heidelberg, 2001.
\bibitem{}	D. Coppersmith, Small solutions to polynomial equations, and low exponent RSA vulnerabilities, Journal of Cryptology. 10(4) (1997), 233-260. 
\bibitem{}	D. Freeman, O. Goldreich, E. Kiltz, A. Rosen, and G. Segev, More constructions of lossy and correlation-secure trapdoor functions, Journal of cryptology. 26(1) (2013), 39-74.
\bibitem{}	D. Galindo, S. Martýn, P. Morillo, and J. Villar, A practical public key cryptosystem from Paillier and Rabin schemes, Public Key Cryptography—PKC 2003. Springer Berlin Heidelberg, 2003. 
\bibitem{}	D. Hofheinz, and E. Kiltz, Secure hybrid encryption from weakened key encapsulation, Advances in Cryptology—CRYPTO 2007. Springer Berlin Heidelberg, 2007.
\bibitem{}	G. Castagnos, A. Joux, F. Laguillaumie, and P. Nguyen, Factoring $pq^2$ with quadratic forms: NICE cryptanalyses, Advances in Cryptology—ASIACRYPT 2009. Springer Berlin Heidelberg, 2009. 
\bibitem{}	H. C. Williams, A modification of the RSA public-key encryption procedure, IEEE Transactions on Information Theory. 26(6) (1980), 726-729.
\bibitem{}	K. Kurosawa, and Y. DesmedtA new paradigm of hybrid encryption scheme, Advances in Cryptology—Crypto 2004. Springer Berlin Heidelberg, 2004.
\bibitem{}	K. Kurosawa, T. Ito, and M. Takeuchi, Public key cryptosystem using a reciprocal number with the same intractability as factoring a large number, Cryptologia. 12(4) (1988), 225-233.
\bibitem{}	K. Kurosawa, W. Ogata, T. Matsuo, and S. Makishima, IND-CCA public key schemes equivalent to factoring n= pq, Public Key Cryptography—PKC 2001. Springer Berlin Heidelberg, 2001.
\bibitem{}	K. Schmidt-Samoa, A new Rabin-type trapdoor permutation equivalent to factoring, Electronic Notes in Theoretical Computer Science. 157(3) (2006), 79-94.
\bibitem{}	M. Abe, R. Gennaro, and K. Kurosawa, Tag-KEM/DEM: A new framework for hybrid encryption, Journal of Cryptology. 21(1) (2008), 97-130.
\bibitem{}	M. Bellare, and P. Rogaway, Optimal asymmetric encryption, Advances in Cryptology—EUROCRYPT'94. Springer Berlin Heidelberg, 1995.
\bibitem{}	M.Elia, M. Piva, and D. Schipani, The Rabin cryptosystem revisited. arXiv preprint (2011). Available at arXiv:1108.5935.
\bibitem{}	M. Nishioka, H. Satoh, and K. Sakurai, Design and analysis of fast provably secure public-key cryptosystems based on a modular squaring, Information Security and Cryptology—ICISC 2001. Springer Berlin Heidelberg, 2002. 
\bibitem{}	M. O. Rabin, Digitalized signatures and public-key functions as intractable as factorization, Technical Report (1979).
\bibitem{}	P. Kocher, Timing attacks on implementations of Diffie-Hellman, RSA, DSS, and other systems, Advances in Cryptology—CRYPTO’96. Springer Berlin Heidelberg, 1996.
\bibitem{}	P. Paillier, Public-key cryptosystems based on composite degree residuosity classes, Advances in cryptology—EUROCRYPT’99. Springer Berlin Heidelberg, 1999.
\bibitem{}	R. Novak, SPA-based adaptive chosen-ciphertext attack on RSA implementation, Public Key Cryptography—PKC2002. Springer Berlin Heidelberg, 2002.
\bibitem{}	R. Rivest, A. Shamir, and L. Adleman, A method for obtaining digital signatures and public-key cryptosystems, Communications of the ACM. 21(2) (1978), 120-126.
\bibitem{}	S. D. Galbraith, Mathematics of public key cryptography. Cambridge University Press, 2012.
\bibitem{}	S. Müller,On the security of Williams based public key encryption scheme, Public Key Cryptography—PKC2001. Springer Berlin Heidelberg, 2001.
\bibitem{}	T. Messerges, E. Dabbish and R. Sloan, Power analysis attacks of modular exponentiation in smartcards, Cryptographic Hardware and Embedded Systems. Springer Berlin Heidelberg, 1999.
\bibitem{}	T. Okamoto, and S. Uchiyama, A new public-key cryptosystem as secure as factoring, Advances in Cryptology—EUROCRYPT'98. Springer Berlin Heidelberg, 1998.
\bibitem{}	T. Takagi, Fast RSA-type cryptosystem modulo $p^k q$, Advances in Cryptology—CRYPTO'98. Springer Berlin Heidelberg, 1998.
\bibitem{}	W. Diffie, and M. Hellman, New directions in cryptography, IEEE Transactions onInformation Theory. 22(6) (1976), 644-654.
\bibitem{}	W. Schindler, A timing attack against RSA with the Chinese Remainder Theorem, Cryptographic Hardware and Embedded Systems—CHES 2000. Springer Berlin Heidelberg, 2000.
	
\end{thebibliography}
\end{document}